\newtheorem{lemma}{Lemma}[section]
\newtheorem{theorem}{Theorem}[section]
\newtheorem{corollary}{Corollary}[section]
\newcommand*\patchAmsMathEnvironmentForLineno[1]{
  \expandafter\let\csname old#1\expandafter\endcsname\csname #1\endcsname
  \expandafter\let\csname oldend#1\expandafter\endcsname\csname end#1\endcsname
  \renewenvironment{#1}
     {\linenomath\csname old#1\endcsname}
     {\csname oldend#1\endcsname\endlinenomath}}
\newcommand*\patchBothAmsMathEnvironmentsForLineno[1]{
  \patchAmsMathEnvironmentForLineno{#1}
  \patchAmsMathEnvironmentForLineno{#1*}}
\title{\textsc{Turning Tiles} is PSPACE-complete}
\author{Kanae Yoshiwatari, Hironori Kiya, Koki Suetsugu, Tesshu Hanaka, Hirotaka Ono}
\date{}
\begin{document}
\maketitle
\begin{abstract}
In combinatorial game theory, the winning player for a position in normal play is analyzed and characterized via algebraic operations. Such analyses define a value for each position, called a game value. A game (ruleset) is called universal if any game value is achievable in some position in a play of the game. Although the universality of a game implies that the ruleset is rich enough (i.e., sufficiently complex), it does not immediately %necessarily 
imply that the game is intractable in the sense of computational complexity. 
This paper proves that the universal game \textsc{Turning Tiles} is PSPACE-complete.
\end{abstract}

\section{Introduction}

A combinatorial game is a game played by two players and there is no randomness and both players know all the information about the game.
Any position of combinatorial games can be associated with a value, called game value.
% A game value is related to the determination of the winner: 
%If positions $P_1$ and $P_2$ have the same game value, $P_1$ and $P_2$ are equivalent in a game tree %some sense a game tree 
%and can be substituted for each other.
Roughly speaking, if positions $P_1$ and $P_2$ have the same game value, the game trees rooted at $P_1$ and $P_2$ have the same canonical structure. 

A ruleset, we always say as the name of a game, is a set of rules of a game.
Some rulesets include any game value and are called \emph{universal}. That is, under a universal ruleset, any game value is achievable in some position in a play. 
The notion of a universal ruleset is defined in \cite{carvalho2019nontrivial}, and the same paper proves that \textsc{Generalized Konane} is a universal ruleset.
The ruleset that was secondarily shown to be universal is \textsc{Turning Tiles} and since then, two other rulesets were also shown to be universal by (polynomial-time) reductions from \textsc{Turning Tiles}%~\cite{turning_tile,turning_tile2}.
~\cite{suetsugu2022discovering,suetsugu2023new}.
%Universal ruleset can be said to be the most complex ruleset. Then it is a natural question that how closely the universal ruleset and the computational complexity are related.
The universality of a game implies that it is sufficiently complex. Then, a natural question arises: 
%how closely the universal ruleset and the computational complexity are related?  
how is the relation between the universality of a game and its computational complexity?  
%By~\cite{hearn2005amazons} \textsc{Generalized Konane} is known to be PSPACE-complete and in this paper, we show that \textsc{Turning Tiles} is PSPACE-complete.
Actually, \textsc{Generalized Konane} was shown to be PSPACE-complete in~\cite{hearn2005amazons} before the proof of \cite{carvalho2019nontrivial}.  
This paper shows that \textsc{Turning Tiles} is also PSPACE-complete. Since the two other universal rulesets are polynomially reducible from \textsc{Turning Tiles}, they are also PSPACE-complete. Consequently, the rulesets currently known to be universal are all PSPACE-complete.   

\medskip 

% In combinatorial games, games in which the sets of options which both players can choose are always equal are called impartial games, and other games are called partisan games.
% % 
% Types of game values that any position of impartial games can have is only the Grundy value.
% On the other hand, a position of partisan games can have any game value.

% \textsc{Turning Tiles} is a partisan game which has any game value~\cite{turning_tile}. 
%% In \textsc{Turning Tiles}, blue, red, and black tiles are placed without any gaps and a token is placed on some tiles.
\textsc{Turning Tiles} is a board game played by blue and red players using two-sided square tiles and one token. One side of a tile is the face colored blue or red, and the other side is the back colored black.
The board forms a rectangle and is filled with square tiles, where some are faced up (blue or red) and others are faced down (black).   
The token is initially placed on a black tile. The board has a coordinate system, which has directions of north, south, east, and west. 
%The blue (resp., red) player in his turn chooses a token and moves it in any direction as long as the blue (resp., red) tiles continue.
Two players alternately move the token straightly on faced-up tiles with their own colors. 
When a player moves the token to a new tile with their color, the player turns the tile into black, so the game is named \textsc{Turning Tiles}.  
A player can continue to move the token as long as tiles of the player's color continue in the same direction (e.g., east) and can stop it at an arbitrary tile. This is one player's turn, and the next player starts to move the token on the black tile.  
%In each turn, a player chooses the direction and destination to move arbitrarily as long as the black tile with the token is adjacent to at least one tile in their color. Note that the color of the tiles between the token and the destination is the player's one.  
%Once the direction and destination are decided, the player moves the token into the destination by turning the passed tiles into the back (black).  
%Then all tiles passed by a token become black tiles.
A player who cannot move on their turn loses.
The above rule on \textsc{Turning Tiles} uses only one token, but we can extend the rule to use multiple tokens in a straightforward way. 
\textsc{Turning Tiles} is a game, but we also use it as the problem name for deciding the winner of a given position. Other game names also follow this style.

\section{Preliminaries}
We assume the basic knowledge of graph theory and combinatorial graph theory. 

We show that \textsc{Turning Tiles} is PSPACE-hard by 
%the 
a reduction from the \textsc{Generalized Geography}.
%In \textsc{Generalized Geography}, we have a directed graph $G=(V,A)$, and a token is placed on the start vertex.
\textsc{Generalized Geography} is a two-player game played on a directed graph. 
The instance of \textsc{Generalized Geography} consists of $G=(V,A)$ and a token on a vertex $s$, called the start vertex.
If $G$ is bipartite, we may use $G=(U,V,A)$ instead of $G=(V,A)$.  

%A player to move chooses a vertex $v$, where $v$ has never been placed a token and $v$ is an adjacent vertex of a vertex which is currently placed a token.
In each turn, a player moves the token to a vertex $v$ adjacent to the vertex with the token such that the token has never visited $v$. 
%The player moves a token to $v$ and ends his turn.
The player who cannot move in their turn loses.

\begin{lemma}[\cite{lichtenstein1980go}]
\textsc{Generalized Geography} on a bipartite planar graph whose maximum degree is 3 is PSPACE-complete, where the start vertex $s$ has indegree 0 and outdegree 2.
\end{lemma}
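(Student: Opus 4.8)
The plan is to reduce from \textsc{TQBF} (deciding truth of a fully quantified Boolean formula) in the textbook way and then repair the resulting graph to be planar, bipartite, of maximum degree $3$, with the prescribed degrees at the start vertex. Given $\Phi = \exists x_1 \forall x_2 \exists x_3 \cdots Q_n x_n\ \psi$ with $\psi$ in CNF, I would build a directed graph consisting of a chain of $n$ ``diamond'' gadgets, the $i$-th controlled by the player who owns quantifier $Q_i$: entering a diamond, the player to move chooses the left or the right side, which commits $x_i$ to \emph{true} or \emph{false}; the token then proceeds to the next diamond. After the last diamond the token enters a vertex owned by the universal player, who selects a clause $C$ of $\psi$; from the clause vertex the existential player selects a literal of $C$; and each literal vertex has an arc to the ``true'' vertex of the corresponding diamond, which has already been visited precisely when that literal was made true. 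Hence the existential player can continue (and, after one more forced dead end, we arrange that the player left unable to move is the universal one) exactly when the chosen assignment satisfies $\psi$; the standard alternating-strategy argument then shows that the existential player wins iff $\Phi$ is true. Membership in PSPACE is immediate, since the token visits each vertex at most once, so the game tree has depth at most $|V|$ and can be searched in polynomial space by the obvious recursive procedure.

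Next I would enforce the structural restrictions. To make the graph \emph{bipartite}, note that replacing an arc $u\to v$ by a path $u\to w_1\to w_2\to v$ through two fresh degree-$2$ vertices leaves the game unchanged --- the two new moves are forced, together they preserve which player is to move, and a previously visited $w_i$ can never matter --- so such double subdivisions can be used to make every directed cycle even in length. To bring the maximum degree down to $3$, I would replace each high-out-degree ``choice'' vertex by a binary out-tree over its out-neighbours and each high-in-degree vertex by a binary in-tree; the new internal nodes have degree $3$, the path through such a tree is forced up to the one genuine choice, and a few more double subdivisions restore bipartiteness.

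The genuinely hard part is \emph{planarity}. The literal-to-diamond arcs run back across the whole variable chain and cross one another, so the graph built so far is far from planar. The fix is to route each wire planarly and to replace every crossing of two directed wires by a fixed, constant-size planar \emph{crossover gadget}: a small sub-graph with two entry ports and two exit ports such that, in any legal continuation of the game, a token entering one port is forced to leave the partner port, neither player can gain a spare move or escape while inside the gadget, and the internal vertices left visited by one traversal never obstruct a later legitimate traversal through the other pair of ports. Engineering such a gadget so that the game value is exactly preserved is the crux of the construction; this is the device supplied by Lichtenstein and Sipser, which I would invoke. Inserting crossover gadgets keeps all degrees at most $3$ and, after further double subdivisions, keeps the graph bipartite and planar.

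Finally, for the condition on the start vertex: taking $Q_1=\exists$, the start vertex $s$ is the top of the first diamond, which by construction has no incoming arc and exactly the two outgoing arcs corresponding to $x_1$ being true or false; if a particular formulation of the gadgets yields some other out-degree at $s$, prepend a dummy existential variable not occurring in $\psi$. I expect the crossover gadget to be the main obstacle --- a careless gadget lets a player leak out of a wire, or lets the set of vertices visited during one crossing distort the game on a later crossing --- whereas simultaneously juggling bipartiteness, maximum degree $3$, planarity, and the conditions on $s$ is fiddly but routine once the gadget is fixed.
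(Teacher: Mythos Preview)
The paper does not give its own proof of this lemma; it merely quotes the result from Lichtenstein and Sipser~\cite{lichtenstein1980go} and uses it as a black box for the reduction to \textsc{Turning Tiles}. Your sketch is, in outline, exactly the argument of that cited reference: the TQBF-to-\textsc{Generalized Geography} reduction via diamond variable gadgets and clause/literal selection is the standard Schaefer construction, the degree-$3$ and bipartiteness normalizations by binary trees and double subdivisions are routine, and the planar crossover gadget you explicitly invoke is precisely Lichtenstein and Sipser's contribution. So there is nothing to compare against in this paper, and your outline agrees with the cited source.
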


Here, the maximum degree of a directed graph $G$ means the maximum degree of the underlying graph of $G$. 

Note that \textsc{Generalized Geography} itself is an impartial game, but in \textsc{Generalized Geography} on bipartite $G=(U,V,A)$ one player can choose a vertex only in $U$ and the other can choose a vertex only in $V$; we can consider that one player is the blue player and the other is red. 

% When $G=(V_B\cup V_R,A)$ is a bipartite graph and the start vertex $s$ is in $V_B$, note that the first player moves a token from $v \in V_B$ to $v' \in V_R$ and the second player moves a token from $v' \in V_R$ to $v \in V_B$.

\begin{comment}
Regarding the symbols in the diagrams in the proof, $s_i$'s are the tiles which can be placed a token firstly in the gadget, and $t_i$'s are the tiles which can be placed a token lastly in the gadget.
Also, the appearance of each tile is shown in Figure \ref{guide}.

    \begin{figure}[htbp]
 \centering
 %\includegraphics[width=0.7\linewidth]{oyako.png}
 \includegraphics[width=0.74\linewidth]{}
 \caption{the appearance of each tile }
 \label{guide}
\end{figure}    
\end{comment}

% 一般化頂点しりとりの頂点を$V=V_B\cup V_R$とし，一般性を失うことなく開始頂点は$V_B$に含まれているものとする．
% このとき，一般化頂点しりとりにおいて先手を青プレイヤ，後手を赤プレイヤとすると，青プレイヤはトークンを必ず$V_B$の頂点から$V_R$の頂点に移動させ，赤プレイヤはトークンを必ず$V_R$の頂点から$V_B$の頂点に移動させる．

% 有向グラフ上での一般化頂点しりとりは，最大次数3かつ平面二部グラフに限定し，さらに開始頂点が入次数0，出次数2の頂点であってもPSPACE完全である~\cite{lichtenstein1980go}

\section{PSPACE-completeness of {\normalfont\textsc{Turning Tiles}}}

\begin{theorem}
% 任意の正整数$k$に対して，コマ数$k$のタイル返しはPSPACE完全である．
\textsc{Turning Tiles} using one token is PSPACE-complete.
\end{theorem}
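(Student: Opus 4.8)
The plan is to establish PSPACE membership by a routine game-tree argument and PSPACE-hardness by a reduction from \textsc{Generalized Geography} on bipartite, planar, maximum-degree-$3$ graphs with start vertex of indegree $0$ and outdegree $2$, which is PSPACE-complete by the cited Lemma. For membership, note that on every turn the mover slides the token at least one tile, every tile the token is moved onto is turned black, and a black tile never becomes coloured again; hence the number of coloured tiles strictly decreases each turn, so on an $m\times n$ board the game lasts fewer than $mn$ turns. A position (board contents, token location, side to move) has polynomial size, so the winner is computed by the obvious recursive minimax evaluation whose recursion depth is polynomial, which runs in polynomial space. Thus \textsc{Turning Tiles} is in PSPACE.

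For hardness, given an instance $(G,s)$ with $G=(U,V,A)$ as above, I first fix a rectilinear grid embedding of $G$ (computable in polynomial time since $G$ is planar of bounded degree): each vertex sits at a grid point and each arc is routed as an axis-parallel polyline with bends at grid points. I then replace every arc by a \emph{wire gadget} and every vertex by a \emph{vertex gadget}, so that the whole board has polynomial size. A wire gadget is a corridor whose tiles alternate in colour; since a player may slide the token only over a maximal run of tiles of their own colour, each move along such a corridor is forced to be a single step and the two players alternate, so the token is transmitted deterministically and neither player has a non-trivial choice inside a wire — in particular stopping ``early'' is impossible. Bends of the polyline are handled by small \emph{turn gadgets} that redirect a straight slide into the perpendicular direction while preserving this forced alternation and, crucially, the parity of whose turn it is. The colours are assigned so that the player owning $U$ and the player owning $V$ play the two roles of \textsc{Generalized Geography}; bipartiteness makes this consistent along every wire.

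The heart of the construction is the vertex gadget. For a vertex $v$, all incoming wires are funneled into a single \emph{bottleneck} tile that is turned black the first time the token passes through it; thereafter the token can never re-enter $v$, which is exactly the no-revisiting rule of \textsc{Generalized Geography}. From the bottleneck the token reaches a small \emph{branching} region whose colour matches the side of $v$, where the player to move chooses which of the (at most two) outgoing wires to send the token down, mirroring the choice of the next vertex. One must check that this gadget admits no unintended moves: the token cannot leak into a wrong wire, no player can make a ``pass''-like move, and a player whose vertex has no unused outgoing wire has no legal move and loses — matching a \textsc{Generalized Geography} player who reaches a vertex with no unvisited out-neighbour. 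Initialising the board with the token at the branching region of the gadget for $s$ (which, having indegree $0$, needs no bottleneck, only two outgoing wires), I claim that the player owning the side of $s$ has a winning strategy in \textsc{Turning Tiles} if and only if the first player wins \textsc{Generalized Geography} on $(G,s)$; the proof is a move-by-move simulation in both directions, using the forced-play property of wires and turn gadgets. I expect the vertex-gadget design to be the main obstacle: making the bottleneck genuinely one-shot, keeping the direction of entry compatible with every possible direction of exit, and ruling out spurious moves, all while respecting the fixed colour of the branching region and the turn parity, and doing so for the several in/out-degree patterns a degree-$3$ vertex can exhibit — getting all of these to hold simultaneously is where the real work lies.
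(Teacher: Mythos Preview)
Your plan tracks the paper's proof closely: the same PSPACE-membership argument via the strictly decreasing count of coloured tiles, and hardness via the same restricted \textsc{Generalized Geography}, with alternating-colour corridors as wires, bend and parity gadgets, and per-vertex gadgets laid out on a polynomial grid using a planar embedding. The one substantive design difference is how revisits are blocked. You propose funnelling all incoming wires through a single \emph{bottleneck} tile that is consumed on first passage; the paper instead keeps the in-portals \emph{separate} and engineers the (unique) indegree-$2$ gadget so that after one traversal the residual tile pattern makes a second entry an immediate loss for the entering player. The paper also first normalises degrees (splitting any outdegree-$0$, indegree-$2$ vertex into two copies) so that only a handful of gadget shapes are needed. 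Your bottleneck idea is reasonable in principle, but merging two orthogonal corridors into a single tile while keeping turn parity consistent on both approaches and preventing the token from leaking back out the other in-wire is exactly the kind of delicacy you flag; the paper's separate-portal design sidesteps that geometry. In any case, as you yourself note, the concrete gadget drawings \emph{are} the proof here---the paper supplies them explicitly (start gadget, outdegree-$0$/$1$/$2$ gadgets, connection and parity gadgets, plus the ``after'' picture certifying the second-visit loss), and your proposal stops just before that point.
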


\begin{proof}

%In \textsc{Turning Tiles}, the number of tiles is finite and the number of options of each position is a polynomial size of board size. 
%% the number of branches that can be considered in one move is a polynomial size of board size. 
%Therefore, the winner determination problem of \textsc{Turning Tiles} is in PSPACE.
We assume that \textsc{Turning Tiles} is played on an $N\times N$ board, which defines the input size. 
The number of options in each position is at most $4N$, and one game ends after at most $N^2$, because the number of blue or red tiles is monotonically decreasing. This implies that determining the winner of \textsc{Turning Tiles} is in PSPACE. 

We then show PSPACE-hardness by a reduction from 
\textsc{Generalized Geography} as mentioned above. 
Without loss of generality, we assume that in \textsc{Turning Tiles} the first player is the blue player,  and in \textsc{Generalized Geography} on a bipartite graph $G=(V_B,V_R,A)$ the start vertex is in $V_B$.
Note that the first player moves a token from $v \in V_B$ to $v' \in V_R$ and the second player moves a token from $v' \in V_R$ to $v \in V_B$.
%Now, we generate a position $T$ of \textsc{Turning Tiles} from any position $G$ of \textsc{Generalized Geography} such that the first blue player wins in $T$ if and only if the first player wins in $G$.
For a given position (instance) $G$ of \textsc{Generalized Geography}, we construct a position $T$ of \textsc{Turning Tiles} such that the first blue player wins in $T$ if and only if the first player wins in $G$. 

Before explaining the details, we give an outline of our reduction. For a vertex and an edge in \textsc{Generalized Geography}, we prepare a vertex gadget and a connection gadget, respectively. Vertex gadgets are linked by connection gadgets in reflecting $G$ of \textsc{Generalized Geography}. Each vertex gadget forms a rectangle associated with a color, i.e., blue or red. A blue (resp., red) vertex gadget corresponds to a blue (resp., red) vertex in \textsc{Generalized Geography}. The rectangle of a blue (resp., red) vertex gadget is surrounded by black tiles except for at most three blue (resp., red) tiles, called \emph{portals}. Portals have two types: in-portal and out-portal. In the descriptions below, we use symbols $s_i$ and $t_i$ for in-portal and out-portal, respectively, where the subscript $i$ represents the name of the corresponding in-neighbor or out-neighbor.     
In connection gadgets, tiles with blue or red colors induce a path, and each connection gadget links an out-portal of a blue (resp., red) vertex gadget and an in-portal of a red (resp., blue) vertex gadget. In the following, we present in detail vertex and connection gadgets in \textsc{Turning Tiles}.

\medskip 

%%%%%%%%%%%%%%%%%%%%%%%%%%%%%%%%%%%%%%%%%%%%%%%%%%%%%%%%%%%%%%%%%%%%%
% 開始頂点
%For the start vertex $s$, 
\paragraph{Gadget for the start vertex $s$}
Let $s$ be the start vertex of $G$. Recall that $s$ is a blue vertex with only two outgoing edges, say $(s,v_1)$ and $(s,v_2)$. 
For $s$, we design the corresponding gadget in $T$ as Figure \ref{start}. 
In this gadget, what the first blue player can do is to move the token to $a$, and what the second red player can do is to move the token to $b$.
In the next turn, the blue player has two options: 
%can choose to place the token on $b$ or $c$. 
move the token to $c$ or $d$. 
%If the blue player places the token on $b$, the token is moved to $d$ and reaches $t_1$ and if the blue player places the token on $c$, the token is moved to $e$ and reaches $t_2$.
If the blue player stops the token at $c$, the token eventually reaches out-portal $t_1$, and otherwise (i.e., the token is put at $d$), the token eventually reaches out-portal $t_2$.
%
%This move corresponds to the first player of \textsc{Generalized Geography} being able to choose the next vertex to place the token.
We can see that the series of these actions correspond to the actions of the first player of \textsc{Generalized Geography}: going to $v_1$ and going to $v_2$. 

% , i.e., on \textsc{Generalized Geography} the first player can choose a vertex to place the token.   
% \begin{figure}[htbp]
%  \centering
%  %\includegraphics[width=0.7\linewidth]{oyako.png}
%  \includegraphics[width=0.74\linewidth]{bw_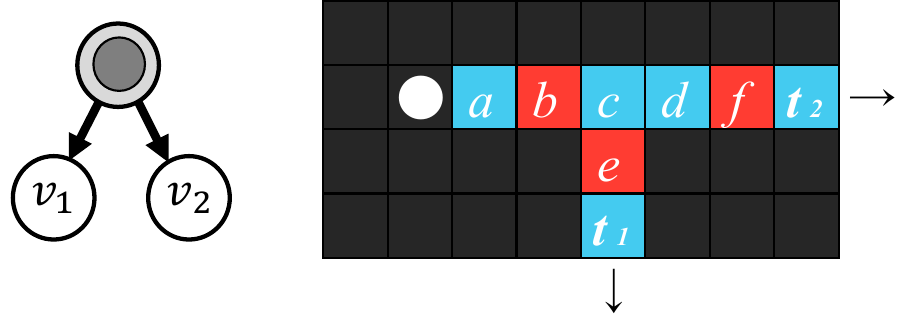}
%  \caption{The start vertex}
%  \label{start}
% \end{figure}

\begin{figure}[htbp]
 \centering
 \includegraphics[width=0.5\linewidth]{}
 \caption{The start vertex}
 \label{start}
\end{figure}
%%%%%%%%%%%%%%%%%%%%%%%%%%%%%%%%%%%%%%%%%%%%%%%%%%%%%%%%%%%%%%%%%%%%%
% 事前説明

%Next, we show other gadgets of $v \in V_B$.
Next, we design blue vertex gadgets of the other vertices in $V_B$. 
%The gadgets of $v \in V_R$ can be obtained by swapping blue tiles with red tiles of each gadget of $v \in V_B$.
The red vertex gadgets for $V_R$ are designed by exchanging the color of tiles and the roles of players. 
%For the gadget of $v \in V_B$, the first player who places the token on $s_i$ of the gadget is the blue player and for the gadget of $v \in V_R$, the first player who places the token on $s_i$ of the gadget is the red player.
% 
% 
%When the maximum degree of a graph is $3$, we have three types of vertices with respect to the number of the outdegree, $0$, $1$, or $2$.
Recall that the maximum degree of $G$ is $3$. It follows that the outdegree of each vertex $v$ in $G$ can be only $0$, $1$, or $2$, which corresponds to the number of options of the player at $v$.  
%Note that a vertex whose indegree is $0$ except for the start vertex can be ignored because it will never be played in a game.
Note that $G$ does not have a vertex with indegree $0$ except for the start vertex because it is never played in a game. 
Also, without loss of generality, we can assume that a vertex $v$ with outdegree $0$ has indegree $1$, because if $v$ has two incoming edges $(u_i,v)$ and $(u_j,v)$, they can be replaced with $(u_i,v)$ and $(u_j,v')$, which does not change the winner. 
In the following, we explain how we construct gadgets for vertices by outdegree. 

% の頂点が存在する．
% これらの頂点それぞれに対応するガジェットと，
% ガジェット同士がマス目のあるボード上で正しく繋がることを示し，一般化頂点しりとりの局面から勝敗が等しい「タイル返し」の局面を正しく作成できることを示す．

%%%%%%%%%%%%%%%%%%%%%%%%%%%%%%%%%%%%%%%%%%%%%%%%%%%%%%the %%%%%%%%%%%%%%%
% 出次数0

\paragraph{Gadgets for vertice with outdegree 0}

%If the outdegree of a vertex $v \in V_B$ is $0$, its indegree is $1$, $2$, or $3$.
We first see the gadgets of $v\in V_B$ with outdegree $0$ and indegree $1$ by the above assumption. 
%In all of these cases, the blue player loses the game when the red player places the token on $v$ in \textsc{Generalized Geography}.
%In \textsc{Generalized Geography}, the blue player in $v$ cannot move, i.e., loses, regardless of the indegree of $v$, and thus we design the gadgets so as to reflect the property.     
In \textsc{Generalized Geography}, the blue player in $v$ cannot move, i.e., loses, %regardless of the indegree of $v$, 
and thus we design the gadgets so as to reflect the property.     
%To build the gadgets of $v$, we use the pairs of blue tiles and red tiles as shown in Figure \ref{0_123}.
The gadget forms a pair of blue and red tiles as shown in Figure \ref{0_123}. 
When the blue player puts the token on $s_i$ from the outside of the gadget, the red player puts the token on the red tile next to $s_i$. Then no place is left to put the token for the blue player, and the blue player loses the game.

% \begin{figure}[htbp]
%  \centering
%  %\includegraphics[width=0.7\linewidth]{oyako.png}
%  \includegraphics[width=0.74\linewidth]{bw_0_123.pdf}
% % \caption{a vertex $v \in V_B$ which is outdegree $0$ and indegree $1,2,3$}
%  \caption{Gadgets for outdegree-$0$ vertices in $V_B$ with indegrees $1,2,3$}
%  \label{0_123}
% \end{figure}

%\begin{figure}[htbp]
% \centering
% %\includegraphics[width=0.7\linewidth]{oyako.png}
% \includegraphics[width=0.74\linewidth]{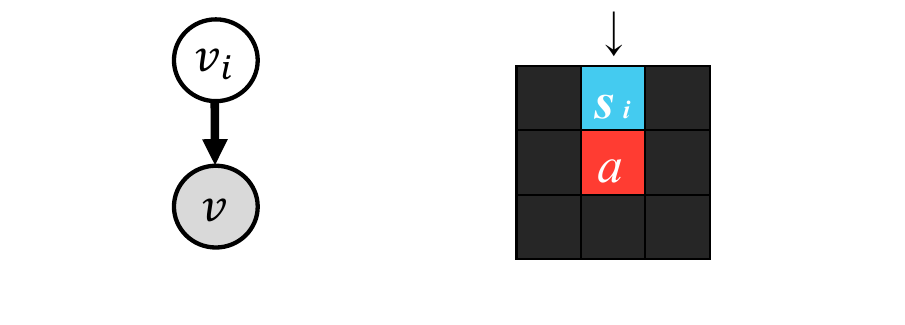}
%% \caption{a vertex $v \in V_B$ which is outdegree $0$ and indegree $1,2,3$}
% \caption{Gadgets for outdegree-$0$ vertices in $V_B$}
% \label{0_123}
%\end{figure}

\begin{figure}[htbp]
  \begin{minipage}[b]{0.48\columnwidth}
    \centering
 \includegraphics[width=\columnwidth]{0_1.pdf}
\caption{Gadgets for outdegree-$0$ vertices in $V_B$}
\label{0_123}
  \end{minipage}
  \hspace{0.04\columnwidth} % ここで隙間作成
  \begin{minipage}[b]{0.48\columnwidth}
    \centering
     \includegraphics[width=\columnwidth]{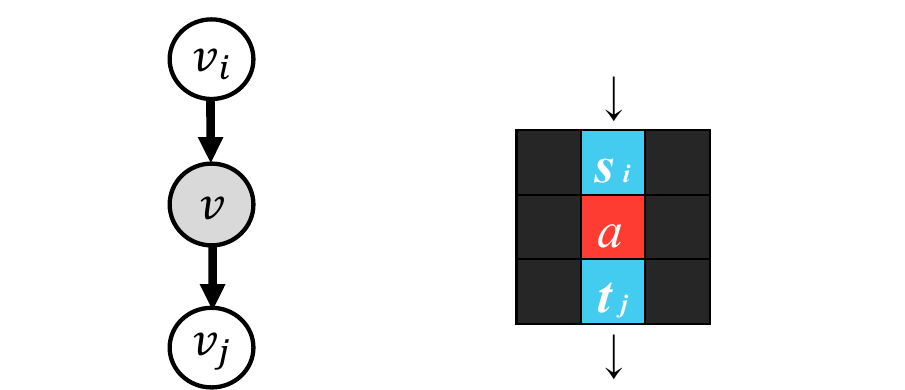}
     \caption{a vertex $v \in V_B$ which is outdegree $1$ and indegree $1$}
    \label{1_1}
  \end{minipage}
\end{figure}

%%%%%%%%%%%%%%%%%%%%%%%%%%%%%%%%%%%%%%%%%%%%%%%%%%%%%%%%%%%%%%%%%%%%%
% 出次数1

\paragraph{Gadgets for vertice with outdegree 1}

We next see the gadgets of $v\in V_B$ with outdegree $1$. 
%If the outdegree of a vertex $v \in V_B$ is $1$, its indegree is $1$ or $2$.
The indegree of $v$ is $1$ or $2$. 
%When the indegree of $v$ is $1$, the token just passes through $v$.
For $v$ with indegree $1$, the blue player of \textsc{Generalized Geography} has no option, i.e., just move the token to the neighboring vertex. 
%The gadget in Figure \ref{1_1} satisfies it: both players can only place the token in a straight line from $s_1$ to $t_1$.
% 
Figure \ref{1_1} shows the gadget of $v$ with indegree $1$, where the blue player passes the turn by moving the token twice. 

%  \begin{figure}[htbp]
%  \centering
%  %\includegraphics[width=0.7\linewidth]{oyako.png}
%  \includegraphics[width=0.74\linewidth]{bw_1_1.pdf}
%  \caption{a vertex $v \in V_B$ which is outdegree $1$ and indegree $1$}
%  \label{1_1}
% \end{figure}

% \begin{figure}[htbp]
% \centering
% %\includegraphics[width=0.7\linewidth]{oyako.png}
% \includegraphics[width=0.74\linewidth]{1_1.pdf}
% \caption{a vertex $v \in V_B$ which is outdegree $1$ and indegree $1$}
% \label{1_1}
%\end{figure}

%On the other hand, When the indegree of $v$ is $2$, the gadget is a little complicated as shown in Figure \ref{1_2}. 
For $v$ with indegree $2$, Figure \ref{1_2} shows the gadget, where $s_i$ and $s_j$ are the incoming tiles. 
%We assume that the blue player places the token on $s_1$ from the outside of the gadget. 
We consider the case when the blue player puts the token on $s_1$ from the outside of the gadget.  
%Then, the red player can place the token to $a$, $b$, or $c$.
Then, the options of the red player are moving the token to $a$, $b$, or $c$. 
%However, if the red player chooses $a$ or $c$, in the next turn, he will lose because the blue player can place the token on $d$ or $e$. 
Among these, $a$ and $c$ are not good options because 
the red player loses by the blue player's moving the token to $d$ and $e$, respectively. 
%Therefore, the red player must place the token on $b$.
Thus, the red player chooses to move the token on $b$.
%
%In the next blue player's turn, the blue player can only place the token on $t_i$ and the token leaves the gadget correctly.
In the next turn, what the blue player can do is to put the token at $t_k$; the token leaves the gadget. 
%By symmetry, the token reaches $t_1$ if the token is placed on $s_2$.
By symmetry, also when the blue player puts the token on $s_j$ from the outside of the gadget, the token eventually reaches $t_k$. From these arguments, this gadget can simulate the role of $v$ with indegree $2$. 
% \begin{figure}[htbp]
%  \centering
%  %\includegraphics[width=0.7\linewidth]{oyako.png}
%  \includegraphics[width=0.74\linewidth]{bw_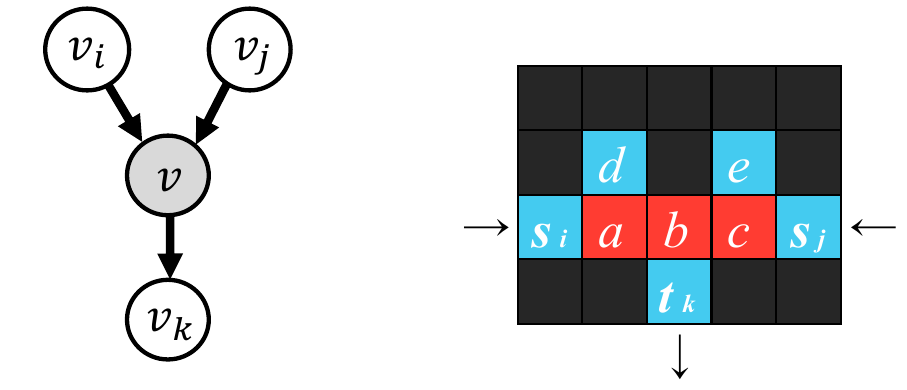}
%  \caption{a vertex $v \in V_B$ which is outdegree $1$ and indegree $2$}
%  \label{1_2}
% \end{figure}
\begin{figure}[htbp]
 \centering
 \includegraphics[width=0.5\linewidth]{}
 \caption{A vertex $v \in V_B$ with outdegree $1$ and indegree $2$}
 \label{1_2}
\end{figure}

Note that the token may visit this gadget twice since it has two in-portals. Such a situation corresponds to a forbidden move of the red player in \textsc{Generalized Geography}. This means that if the red player leads the token to this gadget twice, then the player must lose. We can confirm that this actually holds as Figure \ref{1_2_after}. 

%Since $v$ has two edges to get into $v$, let us consider the case that the token visits $v$ twice.
%This move is illegal and the second player loses the game in \textsc{Generalized Geography}. 
%Now, we make sure that the second red player also loses the game in \textsc{Turning Tiles}.
%Figure \ref{1_2_after} shows the gadget of Figure \ref{1_2} after the token passes through from $s_1$ to $t_1$. Note that the tiles which are placed a token become the black tiles and a token can never be on these tiles again.
%If the token is placed on $s_2$, the red player moves the token to $c$ and the blue player places the token on $e$. 
%Then there is no red tile to move for the red player and the red player loses the game. 

\begin{figure}[tbp]
 \centering
 \includegraphics[width=0.5\linewidth]{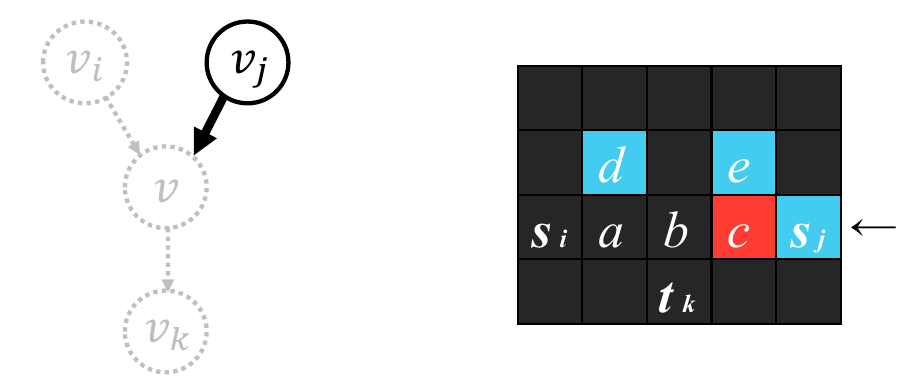}
% \caption{After the token is placed on $v$ in Figre \ref{1_2}}
 \caption{The position after the token passes through the gadget of $v$ in Figre \ref{1_2}}
 \label{1_2_after}
\end{figure}

%%%%%%%%%%%%%%%%%%%%%%%%%%%%%%%%%%%%%%%%%%%%%%%%%%%%%%%%%%%%%%%%%%%%%
% 出次数2

\paragraph{Gadgets for vertice with outdegree 2}
Next, we consider a vertex $v \in V_B$ whose outdegree is $2$ and indegree is $1$.
%For a vertex $v$ with multiple outgoing edges in $V_B$, the first player can choose a next vertex to place the token in \textsc{Generalized Geography} and then in the gadget of $v$ the blue player has the right to control the token in \textsc{Turning Tiles}.
%The gadget is as shown in Figure \ref{2_1} and almost the same as the gadget of the start vertex.
%The blue player can choose $t_1$ or $t_2$ by placing the token on $b$ or $c$.
As mentioned above, \textsc{Generalized Geography}, the outdegree 2 of $v\in V_B$ represents two options of the blue player. The corresponding gadget in \textsc{Turning Tiles} is shown in Figure \ref{2_1}, which has a similar structure to the gadget for the starting vertex $s$. 
By a similar argument to the gadget for $s$, the blue player has two options: moving the token to $b$ or $c$, 
which eventually leads it to $t_1$ or $t_2$, respectively. 
   
%\begin{figure}[htbp]
%  \centering
%  %\includegraphics[width=0.7\linewidth]{oyako.png}
%  \includegraphics[width=0.74\linewidth]{bw_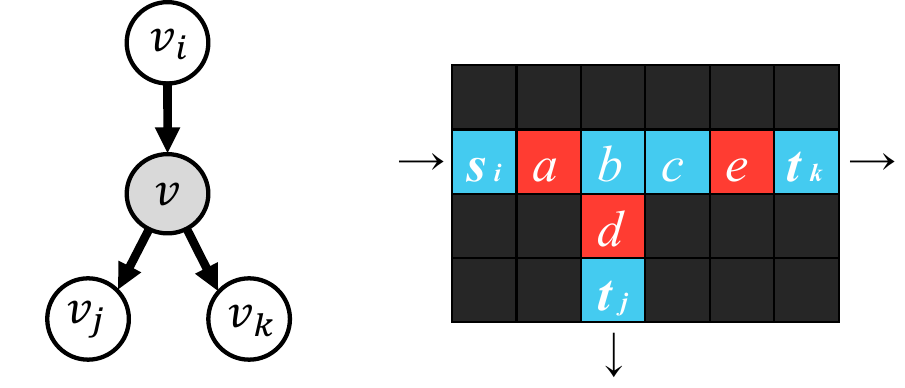}
%  \caption{a vertex $v \in V_B$ which is outdegree $2$ and indegree $1$}
%  \label{2_1}
%\end{figure}

\begin{figure}[tbp]
 \centering
 \includegraphics[width=0.5\linewidth]{}
 \caption{a vertex $v \in V_B$ which is outdegree $2$ and indegree $1$}
 \label{2_1}
\end{figure}

%%%%%%%%%%%%%%%%%%%%%%%%%%%%%%%%%%%%%%%%%%%%%%%%%%%%%%%%%%%%%%%%%%%%%
% ガジェットの接続

\paragraph{Connection gadgets}
Finally, we describe how to arrange gadgets on a planar grid board.
For any $v_b \in V_B$ and any $v_r \in V_R$, we connect $t_j$ of the gadget of $v_b$ and the $s_i$ of the gadget of $v_r$ if $(v_b,v_r) \in A$.
%By using the gadgets of Figure \ref{path}, we can secure enough space to connect all $s_i$ and $t_i$.
By using the gadgets of Figure \ref{path} with a polynomially sufficiently large space, we can connect pairs of $s_i$ and $t_j$ without crossing, since a planar graph $G$ can be embedded in a grid graph with an $O(|V|)$ area~\cite{valiant81}.
% $(v_b,v_r) \in A$.
Also, if the parities of the gadgets are different, we use the gadget in Figure \ref{parity}.
After connecting all gadgets, all remaining squares are filled with black tiles.

\begin{figure}[bp]
  \begin{minipage}[b]{0.5\columnwidth}
    \centering
    \includegraphics[width=.8\columnwidth]{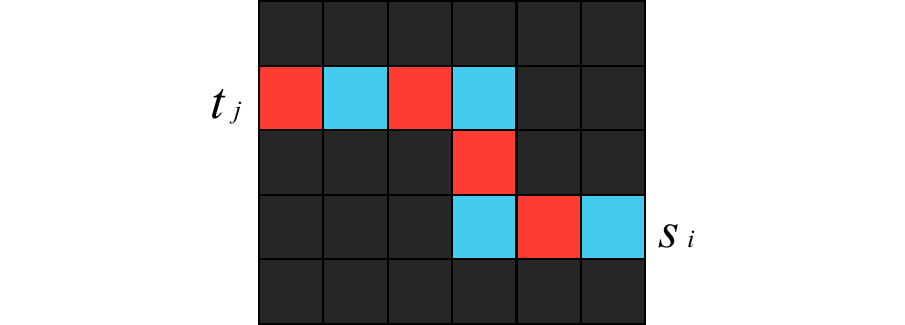}
    \caption{A connection gadget}
    \label{path}
  \end{minipage}
  \hspace{0.04\columnwidth} % ここで隙間作成
  \begin{minipage}[b]{0.5\columnwidth}
    \centering
    \includegraphics[width=.8\columnwidth]{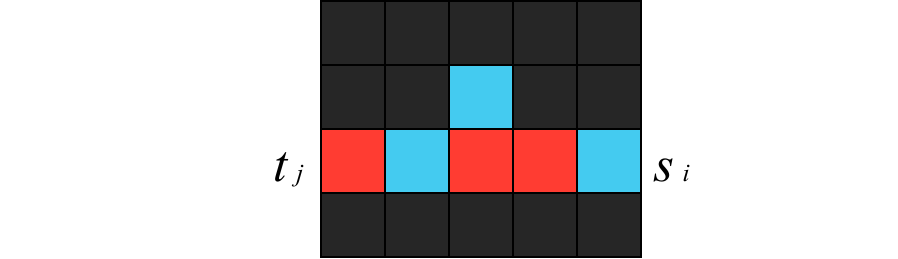}
  \caption{A parity gadget}
  \label{parity}
  \end{minipage}
% \includegraphics[width=0.74\linewidth]{path.pdf}
% \caption{The connection gadgets}
% \label{path}
%\end{figure}
%    \begin{figure}[htbp]
% \centering
% %\includegraphics[width=0.7\linewidth]{oyako.png}
% \includegraphics[width=0.74\linewidth]{parity.pdf}
% \caption{The parity gadget}
% \label{parity}
\end{figure}

% \begin{figure}[htbp]
%  \centering
%  %\includegraphics[width=0.7\linewidth]{oyako.png}
%  \includegraphics[width=0.74\linewidth]{bw_path.pdf}
%  \caption{The connection gadgets}
%  \label{path}
% \end{figure}

%\begin{figure}[htbp]
% \centering
% %\includegraphics[width=0.7\linewidth]{oyako.png}
% \includegraphics[width=0.74\linewidth]{path.pdf}
% \caption{The connection gadgets}
% \label{path}
%\end{figure}

%     \begin{figure}[htbp]
%  \centering
%  %\includegraphics[width=0.7\linewidth]{oyako.png}
%  \includegraphics[width=0.74\linewidth]{bw_parity.pdf}
%  \caption{The parity gadget}
%  \label{parity}
% \end{figure}

%    \begin{figure}[htbp]
% \centering
% %\includegraphics[width=0.7\linewidth]{oyako.png}
% \includegraphics[width=0.74\linewidth]{parity.pdf}
% \caption{The parity gadget}
% \label{parity}
%\end{figure}

%Let us check an example.
Let us see an example of the reduction.
From a position of \textsc{Generalized Geography} in \Cref{graph_example}, we %output 
construct 
%a position 
the corresponding position of \textsc{Turning Tiles} in  \Cref{tile_example}, where both the winners are the same.
The operations of the reduction can be done in polynomial time of the input size.
\end{proof}
% 
% 

\begin{comment}
\begin{figure}[htbp]
 \centering
 %\includegraphics[width=0.7\linewidth]{oyako.png}
 \includegraphics[width=0.4\linewidth]{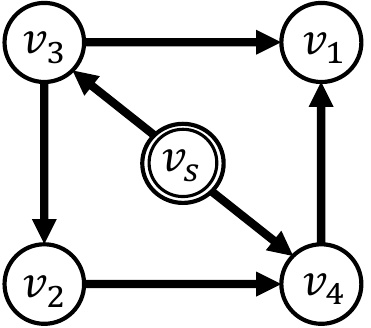}
 \caption{An instance of \textsc{Generalized Geography} on a bipartite and planar graph}
 \label{graph_example}
\end{figure}
% 
% 
    \begin{figure}[htbp]
 \centering
 %\includegraphics[width=0.7\linewidth]{oyako.png}
 \includegraphics[width=0.6\linewidth]{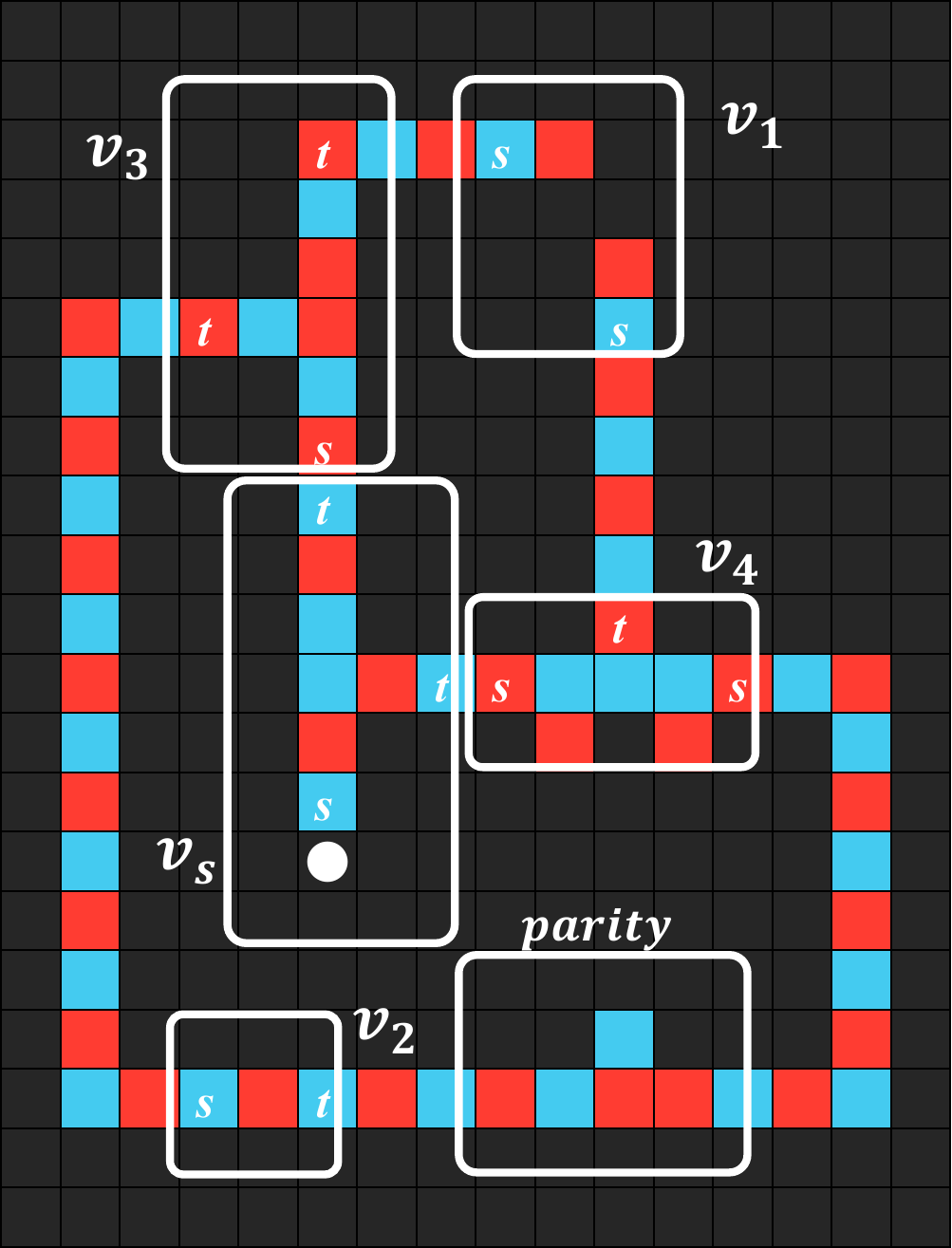}
% \caption{A position of \textsc{Turning Tiles} created based on the instance in Figure \ref{graph_example}}
 \caption{A position of \textsc{Turning Tiles} constructed from the instance in Figure \ref{graph_example}}
 \label{tile_example}
\end{figure}
\end{comment}

\begin{figure}[htbp]
\centering
\begin{subfigure}{0.25\textwidth}
    \includegraphics[width=\textwidth]{}
 \caption{An instance of \textsc{Generalized Geography} on a bipartite and planar graph}
 \label{graph_example}
\end{subfigure}
\hfill
\begin{subfigure}{0.45\textwidth}
    \includegraphics[width=\textwidth]{}
 \caption{A position of \textsc{Turning Tiles} constructed from the instance of \Cref{graph_example}}
 \label{tile_example}
\end{subfigure}        
\caption{A full example of the reduction}
\label{fig:reduction_example}
\end{figure}
% 
% \qed

%In \textsc{Turning Tiles}, when the number of tokens is $k \geq 2$, we have $k-1$ fixed tokens in the corner of the board and also show the PSPACE-completeness of \textsc{Turning Tiles} as when the number of tokens is one.
% (See Figure \ref{komas}).
We can easily extend the result to \textsc{Turning Tiles} with multiple tokens because the proof works even if we add isolated tokens to some unused areas in the reduction.

\begin{corollary}
% 任意の正整数$k$に対して，コマ数$k$のタイル返しはPSPACE完全である．
%For any $k \geq 1$, \textsc{Turning Tiles} using $k$ tokens is PSPACE-complete.
\textsc{Turning Tiles} using any number of tokens is PSPACE-complete.
\end{corollary}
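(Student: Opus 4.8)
The plan is to derive the corollary directly from Theorem~3.1 rather than re-doing the reduction. Membership in PSPACE for the multiple-token version is immediate: a position on an $N\times N$ board with $k\le N^2$ tokens still has at most $4N$ options per turn (one token moves, in one of four directions), and the total number of coloured tiles is still monotonically non-increasing, so any play terminates within $N^2$ moves; hence the game tree has polynomial depth and can be evaluated by the usual alternating/\textsc{PSPACE} recursion. So the content of the corollary is the hardness direction.

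For PSPACE-hardness, I would reduce from \textsc{Turning Tiles} with one token, which is PSPACE-hard by Theorem~3.1. Given a one-token instance on an $N\times N$ board, embed it in a larger $M\times M$ board ($M$ polynomially bounded, e.g.\ $M=N+2$) so that there is empty space not reachable from the original configuration, keep the single token where it was, and place the remaining $k-1$ tokens in that padded region on black tiles that are completely surrounded by black tiles (no adjacent blue or red tile in any of the four cardinal directions). Such an isolated token can never be moved by either player, because a move requires an adjacent tile of the mover's colour; therefore these extra tokens are inert and do not affect which moves are available or who is to move. Consequently the first (blue) player wins the $k$-token position if and only if the first player wins the original one-token position, which proves hardness.

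The only point that needs a short verification is that adding the dead tokens genuinely changes nothing about the game dynamics: since a token contributes to play only through moves that originate at it, and an isolated token on an all-black neighbourhood admits no legal move ever (the neighbourhood stays all black, as nothing can turn those surrounding tiles — they are black already, and no token will ever travel over them), the set of reachable positions of the combined game is in bijection with that of the original game, with the inert tokens carried along unchanged. I expect this to be essentially routine; the only mild obstacle is making sure the padding keeps the board size polynomial and that the isolated tokens are placed so that the original token can never wander into their region (taking $M=N+2$ and a one-tile-thick black moat suffices, since the original token's reachable set is contained in the original $N\times N$ block plus at most the tiles immediately adjacent to coloured tiles, all of which we keep black).

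Combining the PSPACE membership argument with this reduction gives PSPACE-completeness of \textsc{Turning Tiles} with any number of tokens, as claimed.
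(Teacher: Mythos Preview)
Your argument is correct and is essentially the paper's own: the paper simply remarks that the \textsc{Generalized Geography} reduction still works after sprinkling isolated tokens into unused black regions of the board, which is exactly your idea of padding with inert tokens (you just compose the reduction through the one-token theorem instead of modifying the Geography reduction directly). One small slip: with $k$ tokens the per-turn branching factor is at most $4Nk$, not $4N$, but since $k\le N^2$ this is still polynomial and your PSPACE-membership argument is unaffected.
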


Another corollary is about other games (rulesets). Suetsugu \cite{suetsugu2023new} introduces two other rulesets,  \textsc{Go on Lattice} and \textsc{Beyond the door}, which are also shown to be universal rulesets. The proof of their universality is shown by polynomial-time reductions, which imply that their winner decisions are also PSPACE-hard. Since it is easy to see that they belong to PSPACE, we have the following corollary.

\begin{corollary}
\textsc{Go on Lattice} is PSPACE-complete.     
So is \textsc{Beyond the door}. 
\end{corollary}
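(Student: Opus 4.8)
The plan is to derive the corollary from two ingredients that are already essentially at hand: membership in PSPACE for each of the two rulesets, and PSPACE-hardness inherited through reductions that are already in the literature.

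For membership, I would argue exactly as for \textsc{Turning Tiles} in the theorem above. A position of \textsc{Go on Lattice} (resp.\ \textsc{Beyond the door}) on a board of side $N$ is encoded in size polynomial in $N$; from any position the number of legal moves is bounded by a polynomial in $N$; and each move strictly decreases a nonnegative integer potential (for \textsc{Turning Tiles} this was the number of colored tiles, and the analogous monotone quantity exists for these rulesets, since cells/stones that become unusable never become usable again). Hence every play terminates after polynomially many moves, and the naive recursive evaluation of the game tree --- try every move, recurse, and declare the mover a winner iff some child is a loss for the opponent --- runs in polynomial depth using polynomial space per recursion frame. This places the winner-determination problem for each ruleset in PSPACE.

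For hardness, I would invoke the constructions of Suetsugu~\cite{suetsugu2023new}, which establish that \textsc{Go on Lattice} and \textsc{Beyond the door} are universal rulesets by exhibiting polynomial-time reductions that transform a position of \textsc{Turning Tiles} into a position of the target ruleset while preserving its game value; in particular the normal-play outcome (which player has a winning strategy) is preserved. Since the theorem above shows that \textsc{Turning Tiles} with one token is PSPACE-hard, composing our \textsc{Generalized Geography}-to-\textsc{Turning Tiles} reduction with Suetsugu's value-preserving reductions yields a polynomial-time reduction from a PSPACE-hard problem to \textsc{Go on Lattice} and to \textsc{Beyond the door}, proving both PSPACE-hard. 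Combined with the membership argument, both are PSPACE-complete.

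The step I expect to require the most care is making the two halves of the hardness argument mesh: one must check that the \textsc{Turning Tiles} positions produced by our reduction are legitimate inputs to Suetsugu's constructions (they are, since those constructions apply to arbitrary \textsc{Turning Tiles} positions), that the composition of the two polynomial-time maps is itself polynomial-time computable, and that ``game value preserved'' is in fact strong enough for our purposes --- which it is, since equal game values imply the same normal-play outcome. A secondary, purely routine nuisance is verifying the precise branching and termination bounds that give PSPACE membership for the exact formalizations of \textsc{Go on Lattice} and \textsc{Beyond the door} used in~\cite{suetsugu2023new}.
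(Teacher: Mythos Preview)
Your proposal is correct and follows essentially the same approach as the paper: PSPACE membership is asserted to be routine, and PSPACE-hardness is obtained by composing the \textsc{Turning Tiles} hardness result with Suetsugu's polynomial-time reductions from \textsc{Turning Tiles} to the two rulesets. The paper's own argument is just the sentence preceding the corollary, and your write-up simply spells out the details (termination bounds, value-preservation implying outcome-preservation, composability of polynomial-time maps) that the paper leaves implicit.
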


%     \begin{figure}[htbp]
%  \centering
%  %\includegraphics[width=0.7\linewidth]{oyako.png}
%  \includegraphics[width=0.74\linewidth]{bw_n_koma.pdf}
%  \caption{$k-1$ fixed tokens}
%  \label{komas}
% \end{figure}

%\bibliographystyle{plain}
%\bibliography{main_arxiv}
\end{document}